\newtheorem{thm}{Theorem}[section]
\newtheorem{lem}[thm]{Lemma}
\newtheorem{prop}[thm]{Proposition}
\newtheorem{assumption}[thm]{Assumption}
\theoremstyle{definition}
\theoremstyle{remark}
\newtheorem{rem}{Remark}[section]
\theoremstyle{definition}
\newtheorem{exmp}{Example}[section]
\theoremstyle{definition}
\newtheorem{prb}{Problem}[section]
\begin{document}
%
% paper title
% can use linebreaks \\ within to get better formatting as desired
\title{Theoretical Bounds in Minimax Decentralized Hypothesis Testing}

\author{G\"{o}khan G\"{u}l,~\IEEEmembership{Student~Member,~IEEE,}
        Abdelhak M. Zoubir,~\IEEEmembership{Fellow,~IEEE}
\thanks{G. G\"{u}l and A. M. Zoubir are with the Signal Processing Group, Institute
of Telecommunications, Technische Universit\"{a}t Darmstadt, 64283, Darmstadt,
Germany (e-mail: ggul@spg.tu-darmstadt.de; zoubir@spg.tu-darmstadt.de)}% <-this % stops a space
\thanks{Manuscript received April 21, 2016; revised Jane 21, 2016.}}

\maketitle

\begin{abstract}
Minimax decentralized detection is studied under two scenarios: with and without a fusion center when the source of uncertainty is the Bayesian prior. When there is no fusion center, the constraints in the network design are determined. Both for a single decision maker and multiple decision makers, the maximum loss in detection performance due to minimax decision making is obtained. In the presence of a fusion center, the maximum loss of detection performance between with- and without fusion center networks is derived assuming that both networks are minimax robust. The results are finally generalized.

\end{abstract}
% IEEEtran.cls defaults to using nonbold math in the Abstract.
% This preserves the distinction between vectors and scalars. However,
% if the journal you are submitting to favors bold math in the abstract,
% then you can use LaTeX's standard command \boldmath at the very start
% of the abstract to achieve this. Many IEEE journals frown on math
% in the abstract anyway.

% Note that keywords are not normally used for peerreview papers.
\begin{IEEEkeywords}
Robustness, Distributed Detection, Data Fusion, Consensus, Sensor Networks
\end{IEEEkeywords}

% For peer review papers, you can put extra information on the cover
% page as needed:
% \ifCLASSOPTIONpeerreview
% \begin{center} \bfseries EDICS Category: 3-BBND \end{center}
% \fi
%
% For peerreview papers, this IEEEtran command inserts a page break and
% creates the second title. It will be ignored for other modes.
\IEEEpeerreviewmaketitle

%, abbreviated by DD-WF and DD-WoF respectively,

\section{Introduction}
\label{sec:intro}
Detecting events of interest, which is often carried out by hypothesis testing, is fundamental in many areas such as radar, sonar and digital communications. With a Bayesian framework for detection, testing a simple null hypothesis against a simple alternative requires the assumption that both the a priori probabilities (priors) and the exact probability distributions under each hypothesis are known. However, in practice, such assumptions often do not hold as there are deviations from the model assumptions, such as occurs with the presence of outliers or inaccurate a priori information. In order to deal with such situations, an optimum decision rule should take into account the imprecise knowledge of the nominal distributions as well as the a priori probabilities of the hypotheses \cite{Levy}.\\
A statistical test, which maintains a certain level of detection performance, regardless of the uncertainties in the assumed model, is known as minimax detection. While the term \it minimax hypothesis testing \rm is reserved for the test that provides robustness with respect to unknown probabilities of the hypothesis, i.e. $P(\mathcal H_0)$ and $P(\mathcal H_1)$, the term, \it robust hypothesis testing, \rm is widely used for a test that is robust against possible deviations from the nominal probability distributions under each hypothesis \cite{Levy,Hube65,kush08}.\\
In addition to robustness, which is crucial for the design of a decision maker, another important aspect is to include multiple decision makers (physical sensors) into the decision making process, because it is well known that if the events of interest are independent, the system error probability decreases exponentially with the number of sensors \cite{Chernoff}. Depending on the type of data, that is transmitted to the fusion center, a distributed detection network can either be centralized or decentralized, i.e. it is decentralized if the transmitted data is quantized and centralized otherwise. \cite{Tsi}.\\
%\begin{figure}[bb]
%           \begin{floatrow}
%             \ffigbox{\psfragfig[height=55mm]{withoutfusionnew3}}{\caption{Distributed detection network without a fusion center for $K$ decision makers.}\label{fig1}}
%             \ffigbox{\psfragfig[width=65mm]{distwithfusion}}{\caption{Distributed detection network with a fusion center for $K$ decision makers.}\label{fig2}}
%           \end{floatrow}
% \end{figure}
Decentralized detection can be implemented with and without a fusion center network. Figure~\ref{fig1} illustrates a decentralized detection network without a fusion center (DDN-WoF), which is often preferred due to its robustness to single link failure, fairly simplified design procedure and the ease of scalability \cite{Cattivelli}. The fusion is established among sensors (decision makers) via exchange of information in an iterative way, e.g. via belief propagation. A decentralized detection network with a fusion center (DDN-WF), on the other hand, considers a fusion center where the decisions of each single decision maker is fused. An example of a DDN-WF with a parallel topology is illustrated in Figure~\ref{fig2}. Despite a physical lack of robustness, such as being prone to a single link failure, DDN-WF includes the previously introduced DDN-WoF as a special case in terms of error minimization, because DDN-WF allows joint optimization of local decision rules and fusion rules, whereas for DDN-WoF optimization is performed only over decision rules for a pre-determined fusion rule, see \cite[Chapter 3]{Varshney}. In this context, minimax decentralized detection fulfills two important requirements for any detection problem that is intended to be realized in practice: \it high detection accuracy \rm due to multiple sensors and \it reliability \rm due to the decentralized nature of detection as well as the robustness of hypothesis testing.\\
The design of robust decentralized hypothesis testing for with- and without fusion center networks has been studied for the first time by Geroniotis \cite{Ger1,Ger2,Ger3} and generalized thereafter \cite{Ger4}. Later in \cite{Poor1} it was proven that DDN-WF is minimax robust if the individual sensors employ minimax robust tests. Moreover, the authors formalized necessary conditions that need to be satisfied by the Bayesian cost assignment procedure for DDN-WoF. The results derived in \cite{Poor1} generalize the results of \cite{Ger2} to a network of more than two sensors and to more general cost functions. Recent studies in this area apply earlier results to scenarios with constraints such as power \cite{park}, communication rate, \cite{gul2}, or local optimality \cite{censoring}.
\begin{figure}[ttt]
  \centering
  \centerline{\includegraphics[height=55mm]{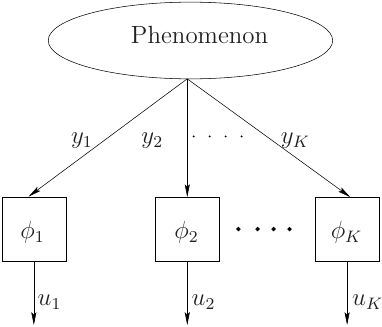}}
\caption{Distributed detection network without a fusion center for $K$ decision makers.\label{fig1}}
\end{figure}
Minimax decentralized hypothesis testing, on the other hand, can be designed as a special case of the Bayesian formulation of the decentralized hypothesis testing problem \cite{Tsi}. Such a design also imposes constraints on the choice of Bayesian costs and on the achievable performance, which parallels with the work of Veerevalli et al. \cite{Poor1}. Furthermore, there are universal losses in minimax decentralized hypothesis testing, and due to the differences in the design of DDN-WoF and DDN-WF. The availability of theoretical bounds makes it possible to know the limitations of the designs and possible gains due to conversion of decentralized networks without having any knowledge about the observation statistics. Although there have been several studies in robust hypothesis testing for a single sensor \cite{gul} and multiple sensors \cite{Poor1}, according to our best knowledge, the aforementioned theoretical bounds (the losses) as the implications of minimax hypothesis testing in the design of DDN-WoF and DDN-WF have never been published yet. \\
In this paper, decentralized detection networks in both the absence and the presence of a fusion center are studied. The constraints and losses resulting from minimax hypothesis testing in a single and a multi-sensor detection system are obtained. The maximum of the performance loss between DDN-WoF and DDN-WF are derived when both networks are minimax robust. Finally, the results are generalized. \\
The organization of the paper is as follows. In the following section, the problem definition is given. In Section~\ref{sec3} the constraints in the design of minimax DDN-WoF are determined. In Section~\ref{sec4} both for a single sensor as well as for multiple sensors, the maximum loss in detection performance due to minimax hypothesis testing is obtained. In Section~\ref{themaximumlossbetweenminimaxddnwofandddnwf0074} the maximum loss in detection performance between DDN-WoF and DDN-WF is derived when both networks are minimax robust. In Section~\ref{conclusions} the paper is concluded and the proofs are provided in Appendices~\ref{appendix1}-\ref{proofofequation00730}.
\begin{figure}[ttt]
  \centering
  \centerline{\includegraphics[width=65mm]{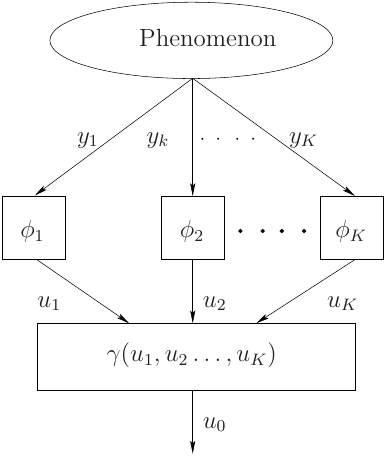}}
\caption{Distributed detection network with a fusion center for $K$ decision makers.\label{fig2}}
\end{figure}
\section{Problem Definition}\label{sec2}
Let $P_0$ and $P_1$ be two distinct probability measures on a measurable space $(\Omega,{\mathscr{A}})$, and for each decision maker $i$ consider the following binary hypothesis testing problem:
\begin{align}\label{eq1}
\mathcal{H}_0&: Y_i \sim P_0\nonumber\\
\mathcal{H}_1&: Y_i \sim P_1
\end{align}
where each $Y_i:\Omega\rightarrow \Omega$ is a continuous random variable corresponding to the observation $y_i\in \Omega$, see Figures~\ref{fig1},\ref{fig2}. Without loss of generality $\Omega$ can be any interval of real numbers or the whole real line\footnote{The random variables $Y_i$ can also be vector valued. This does not affect the results derived throughout the paper.}. To decide for the hypothesis $\mathcal{H}_j$, each sensor makes an observation $y_i$ and gives a decision $\phi_i \in \{0,1\}$. The decisions of all sensors $\boldsymbol{\phi}=\{\phi_1,...,\phi_n\}$ are either exchanged among sensors, or sent to the fusion center in order to give the final decision which is made based on a fusion rule $\gamma:\boldsymbol{\phi}\mapsto\{0,1\}$. The aim of decentralized hypothesis testing is to find the decision and fusion rules which minimize the Bayesian risk
\begin{equation}\label{eq2}
R=\sum_{i=0}^1\sum_{j=0}^1\pi_{j} C_{ij} P[\gamma(\boldsymbol{\phi})=i|{\mathcal H}_j]
\end{equation}
where $\pi_0=P(\mathcal{H}_0)$ and $\pi_1=P(\mathcal{H}_1)=1-\pi_0$ are the a priori probabilities and $C_{ij}\in\mathbb{R}_{\geq 0}$ are the costs of making a decision $i$ when hypothesis $j$ is true. To achieve minimum error probability it is reasonable to assign equal costs to error and detection probabilities, i.e. $C_{01}=C_{10}$ and $C_{00}=C_{11}$, where $C_{01}>C_{00}$ (since the risk is needed to be minimized). Together with this assumption defining the overall false alarm and miss detection probabilities as $P_{F_0}=P[\gamma(\boldsymbol{\phi})=1|{\mathcal H}_0]$ and $P_{M_0}=P[\gamma(\boldsymbol{\phi})=0|{\mathcal H}_1]$ leads to
\begin{align}\label{eq3}
R=&\pi_0 (P_{F_0}(C_{01}-C_{00})+P_{M_0}(C_{01}-C_{00}))\nonumber\\
+&P_{M_0}(C_{01}-C_{00})+C_{00}.
\end{align}
\subsection{Minimum error probability and minimax decision and fusion rules}\label{decisionfusionrules}
Depending on the practical application, it may or may not be possible to know the a priori probabilities. These two cases lead to two different optimization problems as explained in the following sections.
\subsubsection{Minimum error probability decision and fusion rules}\label{subsection1}
For various applications, such as digital communications or classification, it is reasonable to assign equal a priori probabilities, e.g. $\pi_0=1-\pi_1=1/2$, which results in decision and fusion rules that solve
\begin{equation}\label{eq4}
(\boldsymbol{\phi}_0,\gamma_0)=\arg \min_{\boldsymbol{\phi},\gamma}{R(\boldsymbol{\phi},\gamma,\cdot)},
\end{equation}
where
\begin{align}\label{eq5}
{R(\boldsymbol{\phi},\gamma,\cdot)}=&\frac{1}{2}(C_{01}-C_{00})(P_{F_0}(\boldsymbol{\phi},\gamma,P_0)\nonumber\\
+&P_{M_0}(\boldsymbol{\phi},\gamma,P_1))+C_{00}.
\end{align}
Here, $\boldsymbol{\phi}_0$ and $\gamma_0$ are called the minimum error decision and fusion rules.
\subsubsection{Minimax decision and fusion rules}\label{subsection2}
In many other applications, including radar and sonar the a priori probabilities are not known and it is preferable to design the test based on the least favorable priors. This involves solving
\begin{align}\label{eq6}
(\boldsymbol{\phi}_r,\gamma_r,{\pi_0}_r)=&\arg \underset{\boldsymbol{\phi},\gamma}{\rm min\it}\,\,\underset {\pi_0}{\rm max\it}\,\,{R(\pi_0,\boldsymbol{\phi},\gamma,\cdot)}\nonumber\\
=&\arg \underset{\pi_0}{\mathrm{max}}\,\,\underset {\boldsymbol{\phi},\gamma}{\rm min\it}\,\,{R(\pi_0,\boldsymbol{\phi},\gamma,\cdot)}
\end{align}
where
\begin{align}\label{eq7}
{R(\pi_0,\boldsymbol{\phi},\gamma,\cdot)}=&\pi_0(C_{01}-C_{00})\left(P_{F_0}(\boldsymbol{\phi},\gamma,P_0)-P_{M_0}(\boldsymbol{\phi},\gamma,P_1)\right)\nonumber\\
+&(C_{01}-C_{00})P_{M_0}(\phi,\gamma,P_1)+C_{00}.
\end{align}
In \eqref{eq6} performing the maximization first leads to
\begin{equation}\label{eq8}
\frac{\partial R(\pi_0,\boldsymbol{\phi},\gamma,\cdot)}{\partial\pi_0}=0\Longrightarrow P_{F_0}(\boldsymbol{\phi},\gamma,P_0)=P_{M_0}(\boldsymbol{\phi},\gamma,P_1)
\end{equation}
Inserting \eqref{eq8} in \eqref{eq6} the problem definition can be rewritten as
\begin{align}\label{eq9}
(\boldsymbol{\phi}_r,\gamma_r) &= \arg \min_{\boldsymbol{\phi},\gamma} P_{M_0}(\boldsymbol{\phi},\gamma,P_1)\nonumber\\
 &\mathrm{s.t.}\quad P_{F_0}(\boldsymbol{\phi},\gamma,P_0)=P_{M_0}(\boldsymbol{\phi},\gamma,P_1).
\end{align}
Note that $(\boldsymbol{\phi}_r,\gamma_r,{\pi_0}_r)$ constitutes a saddle value for \eqref{eq6}, therefore a certain level of detection performance can always be maintained \cite{Levy}, i.e. inserting \eqref{eq8} in \eqref{eq7} makes $R$ independent of $\pi_0$. The solution of \eqref{eq9} results in minimax decision and fusion rules.

\subsection{Assumptions}\label{assumptions}
The following three assumptions will be valid throughout the paper.

\begin{assumption}\label{assumption1}
The random variables $Y_i$, which correspond to the observations $y_i$, are identically distributed and mutually independent.
\end{assumption}

\begin{assumption}\label{assumption2}
The decision rules are identical, i.e. $\phi_i(Y)=\phi_j(Y)$ almost surely for all pairs $(i,j)$.
\end{assumption}

\begin{assumption}\label{assumption3}
The density of the likelihood ratio function $l=d P_1/d P_0$ does not have a point mass under each hypothesis, i.e. $P[l(Y_i)=c|\mathcal{H}_0]=P[l(Y_i)=c|\mathcal{H}_1]=0$ for all constant $c$ and decision maker $i$.
\end{assumption}

The first assumption is basically needed for mathematical tractability. It is also a fairly valid assumption if the sensors are geographically disperse. The second assumption is considered both for DDN-WoF and DDN-WF for the sake of fairness of comparisons. Another motivation for the second assumption is due to the result of minimax modeling of DDN-WoF, which will be shown in Section~\ref{sec_a}. Note that identical local decision makers are not always optimum, see counterexamples in \cite{Kantor}, but they often result in little or no loss of performance \cite{Varshney} and they are asymptotically optimum \cite{Tsi2}. The last assumption is needed to make the Receiving Operating Characteristic (ROC) curve continuous. This property will later be used in the derivations.

\subsection{Different solutions of the defined optimization problems}\label{solutionsforproblems}
The optimization problems defined by \eqref{eq4} in Section~\ref{subsection1} and by \eqref{eq9} in Section~\ref{subsection2}  can be solved differently by considering different network configurations as detailed in the following.

\subsubsection{Singe Sensor}\label{singlesensordefinition}
If there is a single sensor in the sensor network, the fusion rule $\gamma$ is undefined. There is a single decision rule $\boldsymbol{\phi}_r:=\phi_r$ or $\boldsymbol{\phi}_0:=\phi_0$ and the overall false alarm and miss detection probabilities, $P_{F_0}$ and $P_{M_0}$ are replaced by $P_F$ and $P_M$, respectively.

\subsubsection{Multi-sensor DDN-WoF}\label{multisensordefinition}
For multiple sensor decentralized detection network without a fusion center the solution of \eqref{eq4} and \eqref{eq9} is performed only over the decision rules $\boldsymbol{\phi}$. The fusion rule $\gamma$ is fixed a-priori, see \cite[Chapter 3]{Varshney}. The main advantage of such a design is that the optimization over only decision rules is much simpler than an optimization over both decision and fusion rules. Moreover, there are readily accepted fusion rules such as the majority voting rule, which is known to be quite powerful. If all sensors are identical, and the fusion is performed iteratively via exchange of decisions, then the result converges to the decision of the majority. Therefore, for $K$ decision makers, the fusion rule can be chosen as the majority voting rule defined by
 \begin{equation}\label{eq10}
\gamma^K= \begin{cases} 0, &  \mbox{ $\sum_i^K \phi_i<\frac{K}{2}$} \\ \kappa, &  \mbox{ $\sum_i^K \phi_i=\frac{K}{2}$}\\ 1, &  \mbox{ $\sum_i^K \phi_i>\frac{K}{2}$} \end{cases}
\end{equation}
where $\kappa\in\{0,1\}$ is a Bernoulli random variable (r.v.) with success probability $P(\kappa=1)=1/2$ \cite{frasca}. In the sequel, for DDN-WoF (as well as for DDN-WF), false alarm and miss detection probabilities of each sensor are denoted by $P_F=P_{F_i}$ and $P_M=P_{M_i}$ for all $i$, respectively, as a consequence of Assumptions~\ref{assumption1},\ref{assumption2}.

\subsubsection{Multi-sensor DDN-WF}\label{multisensordefinition2}
Decentralized detection network with a fusion center provides a solution to \eqref{eq4} and \eqref{eq9} over both the decision and the fusion rules, still considering all the assumptions given in Section~\ref{assumptions}. In particular, Assumptions~\ref{assumption1},\ref{assumption2} imply that $\sum_i \phi_i\sim \mbox {Binomial}$ is the sufficient statistic and the optimum fusion rule is called K-out-of-N fusion rule \cite{Tsi,Varshney}:
\begin{equation*}
\gamma^K_{k} = \begin{cases} 0, &  \mbox{ $\sum_i^K \phi_i\leq k$} \\ 1, &  \mbox{ $\sum_i^K \phi_i>k$}. \end{cases}
\end{equation*}
Hence, the overall false alarm and miss detection probabilities are given by
\begin{align}\label{eq11}
P_{F_0}(P_F;K,k)=&1-B(k;K,P_F)\nonumber\\
=& 1-\sum_{i=0}^{\lfloor k\rfloor}\binom{K}{i}P_F^i(1-P_F)^{K-i}
\end{align}
and
\begin{align}\label{eq12}
P_{M_0}(P_M;K,k)=&B(k;K,1-P_M)\nonumber\\
=& \sum_{i=0}^{\lfloor k\rfloor}\binom{K}{i}(1-P_M)^i P_M^{K-i}
\end{align}
where $B$ is a binomial cumulative distribution function with at most $k$ successes out of $K$ total trials. Equations \eqref{eq11}, \eqref{eq12} imply that the optimization over $\gamma$ in \eqref{eq4} and \eqref{eq9} can be performed over integers $k$. Additionally, the choice of $k=K/2$ corresponds to the error probabilities resulting from DDN-WoF.

\subsection{Objectives}
Having defined two different optimization problems in Section~\ref{decisionfusionrules} and different solutions in Section~\ref{solutionsforproblems} considering the assumptions in Section~\ref{assumptions}, next, the aim is to define the objectives:
\begin{itemize}
\item The first aim is to determine the restrictions in the design of minimax DDN-WoF, in particular, the constraints on the choice of Bayesian costs and on the achievable performance.
\item The second aim is to find the maximum (over all probability distributions on $({\Omega},\mathscr{A})$) performance loss for DDN-WoF if the minimax decision rule is considered instead of the minimum error probability decision rule.
\item The third aim is to find the maximum (over all probability distributions on $({\Omega},\mathscr{A})$) performance loss between minimax DDN-WoF and minimax DDN-WF.
\end{itemize}
The solutions of these three problems are addressed in the following sections in the same order as given above.

%In the following sections, the decision rule will be assumed to be $\phi_r$. If there are more than one decision maker, it is assumed that the final decisions $\gamma^K$ of the distributed detection system satisfy $E_0[\gamma^K=1]=E_1[\gamma^K=0]$. Hence, the error probability is bounded from above with respect to the variations on $\pi_0$.
%However, there is a price to be paid for this property, which is a reduced level of performance compared to the case when $\pi_0$ is known.
%Since $\phi_r$ is known to be a non-randomized likelihood ratio test for continuous $p_0$ and $p_1$ \cite{Tsi}, \eqref{eq6} can be reformulated for the decision rule as

\section{Constraints in the Design of Minimax DDN-WoF}\label{sec3}
Minimax formulation of the decentralized detection problem imposes constraints on the choice of the Bayesian costs as well as on the achievable performance since the problem is a constraint type of optimization.\footnote{The choice of the more general costs functions, see for example \cite{bchen}, may result in different costs assignments for minimax robustness. Throughout this paper only the Bayesian formulation of the hypothesis testing problem is considered.} The following sections address these two issues.

\subsection{Constraints on the System Design}\label{sec_a}
Consider the Bayesian formulation of the risk function
\begin{equation}\label{eq13}
R=\sum_{i_1,\ldots, i_K,j}\pi_{j} C_{i_1,\ldots, i_K, j}P_{j}[\phi_{1}=i_1,\ldots, \phi_{K}=i_K]
\end{equation}
where $C_{(\cdot)}$ is the cost of making decisions $\phi_{1}=i_1,\ldots,\phi_{K}=i_K$ when the true hypothesis is $\mathcal{H}_j$. Minimax decision rules $\boldsymbol{\phi}_r$ should minimize $R$ subject to the constraint $P_{F_0}=P_{M_0}$. A person by person optimum (PBPO) solution that minimizes \eqref{eq13} is known by Tsitlikis \cite{Tsi} and Varshney \cite{Varshney} to be the likelihood ratio tests. The results in \cite{Varshney} may be preferable over \cite{Tsi}, since \cite{Varshney} treats DDN-WoF and DDN-WF separately and specifies the likelihood ratio tests explicitly. Such a specification requires tedious steps of derivations and for every $K$, the derivations should be repeated from the beginning. Therefore, the results are not generic for every number of sensors and in such a case, a reasonable approach, which will be followed shortly, is to consider a specific case and generalize the results thereafter. When $K=2$, \cite{Varshney} suggests that if Assumption~\ref{assumption1} holds, the likelihood ratio tests for two sensors are given by
\begin{equation}\label{eq14}
\phi_1:\frac{p_1(y_1)}{p_0(y_1)}\,\stackrel{\phi_1=1}{\underset{\phi_1=0}{\lesseqgtr}}\frac{\pi_0\int_{E}p_0(y_2)\left[C_a+P[\phi_2=0|y_2]C_b\right]\mathrm{d}y_2}{\pi_1\int_{E}p_0(y_2)\left[C_c+P[\phi_2=0|y_2]C_d\right]\mathrm{d}y_2},
\end{equation}
and $\phi_2$, which is obtained similarly by exchanging $(y_2,\phi_2)$ with $(y_1,\phi_1)$ in \eqref{eq14}, where
\begin{align*}
C_a=C_{110}-C_{010},\quad  C_b=C_{100}-C_{000}+C_{010}-C_{110},  \nonumber\\
C_c=C_{011}-C_{111},\quad  C_d=C_{001}-C_{101}+C_{111}-C_{011}. \nonumber
\end{align*}
The DDN-WoF shown in Figure~\ref{fig1} does not allow exchanging observations among the decision makers. In contrast, the optimum decision rules given by \eqref{eq14} requires the observation of the other sensor, e.g. $\phi_1$ needs $y_2$ ,unless $C_b=0\wedge C_d=0$. Hence, the choice of costs
\begin{equation}\label{eq15}
{\mathcal{C}}_2=\{C_{i_1i_2j}:C_b=0\wedge C_d=0\}
\end{equation}
guarantees a possible minimax test. As a result, both decision makers own the same thresholds (\eqref{eq14} follows from \eqref{eq15}), which leads to Assumption~\ref{assumption2}. For $K>2$, it is expected that there are costs, which couple the sensor decisions and some other costs, which have no effect in coupling. This generalization suggests that the coupling terms must be set to zero and the other parameters should be determined accordingly in order to guarantee minimax robustness. The choice of majority voting rule and identical sensor decisions is a consequence of the analysis above. The identical sensor decisions can further be restricted with the following proposition.
\begin{prop}\label{prop1}
For identical sensor decisions, $P_{F_0}=P_{M_0}$ if and only if all the local decision rules result in $\theta=P_{F}(\phi,P_0)=P_{M}(\phi,P_1)$, with a suitable choice of $(C_a,C_b)$ and $(\pi_0,\pi_1)$.
\end{prop}
A short proof of Proposition~\ref{prop1} is given in Appendix~\ref{appendix1}. In conclusion, \eqref{eq4} has a solution for DDN-WoF with the majority voting rule and identical local sensor decisions, which result in $\theta=P_M=P_F$ for every sensor. In the following sections, DDN-WoF will be referred to this setting.

\subsection{Constraints on the Achievable Performance}\label{sec_b}
In addition to the constraint on the selection of the costs to achieve a robust test, there is a similar restriction, in terms of performance, when a new sensor is added to the sensor network. The following proposition states this claim.
\begin{prop}\label{prop2}
When majority voting rule is combined with identical local decision makers, adding one more sensor to the network of $2i-1$, $i\geq 1$ sensors does not improve the detection performance.
\end{prop}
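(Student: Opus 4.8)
The plan is to write the network false-alarm probability explicitly as a function of the number of identical decision makers and show that it is unchanged when the count goes from $2i-1$ to $2i$; the miss-detection probability will then follow by symmetry, and Prop.~\ref{prop1} ties both quantities to the detection performance.

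First I would fix $n=2i-1$ and work under $\mathcal H_0$. Let $S=\sum_{j=1}^{n}\phi_j$ be the number of sensors that declare $\mathcal H_1$; since the $Y_j$ are i.i.d.\ and the local rules identical, $S$ is Binomial$(n,P_F)$, and because $n$ is odd the rule \eqref{eq9} never produces a tie, so $P_{F_\mathcal{O}}(n)=\mathrm{Pr}(S\ge i)$. Adding an $(n+1)$-st identical sensor and conditioning on its decision $\phi_{n+1}$, I would note that with $n+1=2i$ the threshold in \eqref{eq9} equals $i$ and a tie (weight $\tfrac12$ via $\kappa$) occurs exactly when the total count equals $i$; hence, with $q_k=\mathrm{Pr}(S=k)=\binom{n}{k}P_F^k(1-P_F)^{n-k}$,
\begin{equation*}
P_{F_\mathcal{O}}(n+1)=(1-P_F)\bigl[\mathrm{Pr}(S\ge i+1)+\tfrac12 q_i\bigr]+P_F\bigl[\mathrm{Pr}(S\ge i)+\tfrac12 q_{i-1}\bigr],
\end{equation*}
the first bracket being the case $\phi_{n+1}=0$ (a tie then needs $S=i$) and the second the case $\phi_{n+1}=1$ (a tie then needs $S=i-1$).

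Next I would use $\mathrm{Pr}(S\ge i)=\mathrm{Pr}(S\ge i+1)+q_i$ and collect terms, reducing the target equality $P_{F_\mathcal{O}}(n+1)=P_{F_\mathcal{O}}(n)$ to the single identity $P_F\,q_{i-1}=(1-P_F)\,q_i$, which upon substituting the binomial weights is precisely $\binom{n}{i-1}=\binom{n}{i}$ --- true because $n=2i-1$ so that $(i-1)+i=n$. This gives $P_{F_\mathcal{O}}(2i)=P_{F_\mathcal{O}}(2i-1)$, and since the fusion rule \eqref{eq9} with $\mathrm{Pr}(\kappa=1)=\tfrac12$ is invariant under interchanging the labels $0\leftrightarrow1$ together with the roles $P_F\leftrightarrow P_M$, the identical computation yields $P_{M_\mathcal{O}}(2i)=P_{M_\mathcal{O}}(2i-1)$. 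By Prop.~\ref{prop1} the robust network's performance is governed by the common value $P_{F_\mathcal{O}}=P_{M_\mathcal{O}}$, which is therefore unchanged --- in particular not improved --- by the extra sensor.

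The only delicate step is the accounting of the tie-breaking randomization: after adding the $2i$-th sensor a tie can arise in \emph{both} conditioned cases, and $\kappa$ must be counted as contributing exactly $\tfrac12$ in each; it is this pair of half-weights, combined with the symmetry $\binom{2i-1}{i-1}=\binom{2i-1}{i}$ of the central binomial coefficients, that produces the exact cancellation. The remaining manipulations are elementary binomial algebra.
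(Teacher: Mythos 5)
Your proof is correct and is essentially the paper's own argument in probabilistic dress: conditioning on the decision of the added sensor is exactly the counterpart of the paper's splitting of $P_{F_\mathcal{O}}^{2K-1}$ into $P_F\,P_{F_\mathcal{O}}^{2K-1}+(1-P_F)\,P_{F_\mathcal{O}}^{2K-1}$, and your key identity $\binom{2i-1}{i-1}=\binom{2i-1}{i}$ is equivalent to the paper's $\binom{2K-1}{K}=\tfrac12\binom{2K}{K}$ used to absorb the tie term. The careful handling of the $\kappa$-randomization and the passage to $P_{M_\mathcal{O}}$ are both sound, so nothing further is needed.
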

In Appendix~\ref{appendix2}, Proposition~\ref{prop2} is proven by showing that the false alarm probability for $2K-1$ sensors
\begin{equation*}
P_{F_0}^{2K-1}=\sum_{i=K}^{2K-1}\binom {2K-1} i{P_F}^i(1-{P_F})^{2K-1-i}
\end{equation*}
is equal to the false alarm probability for $2K$ sensors
\begin{align}\label{eq16}
P_{F_0}^{2K}&=\sum_{i=K+1}^{2K}\binom {2K} i {P_F}^i(1-P_F)^{2K-i}\nonumber\\
&+\frac{1}{2}\binom {2K} K {P_F}^K(1-P_F)^K
\end{align}
where the second term in \eqref{eq16} comes from the randomization in \eqref{eq10}. Since $P_{F_0}=P_{M_0}$, see Appendix~\ref{appendix2}, it follows that $P_{M_0}^{2K-1}=P_{M_0}^{2K}$.

\section{The Maximum Loss due to Minimax Decision Making in DDN-WoF}\label{sec4}
From the previous section, there are two main conclusions: first is the setting of DDN-WoF, i.e. majority voting rule together with identical sensor decisions leading to $\theta=P_M=P_F$, and second is the fact that increasing the number of sensors from odd numbers to even numbers does not improve the detection performance. Hence, without loss of generality, only the odd numbered sensors can/will be considered with the given DDN-WoF setting for further analysis.\\
Before proceeding with the derivations, some sets and functions are needed to be defined. The main idea is to perform the optimization over the receiver operating characteristics (ROC)s which uniquely characterizes all possible false alarm and miss detection pairs resulting from the decision rules ${\phi}_r$ and ${\phi}_0$. Formally,
\begin{equation*}
\mathcal{P}_\mathcal{F}\times \mathcal{P}_\mathcal{M}=\{(P_F(\phi(t),P_0),P_M(\phi(t),P_1)):\forall t\in[0,\infty]\}
\end{equation*}
defines the ROC, where $\mathcal{P}_\mathcal{F}$ is the set of all $P_F$s and $\mathcal{P}_\mathcal{M}$ is the set of all corresponding $P_M$s. Let $r_t:[0,1]\rightarrow [0,1]$ be a function with the mapping $\mathcal{P}_\mathcal{F} \overset{r_t}{\mapsto} \mathcal{P}_\mathcal{M}$, and $\phi(t)$ is the deterministic likelihood ratio test with threshold $t$. The following Lemma is necessary for further analysis:
\begin{lem}\label{lemma1}
Let Assumption~\ref{assumption3} holds. The function $r_t$ is bounded on $[0,1]^2$, passes through $(P_F,P_M)=(1,0)$ and $(P_F,P_M)=(0,1)$, and is continuous, and convex.
\end{lem}
A proof of Lemma~\ref{lemma1} can be found in Appendix~\ref{appendix3}.
\begin{exmp}\label{example1}
For mean shifted Gaussian distributions $P_0$ and $P_1$ each having a variance $\sigma^2=1$ and, a difference in mean $d=1$, the function $r_t$ is given by:
\begin{equation*}
P_M=r_t(P_F)=F\left(F^{-1}(1-P_F)-1\right)
\end{equation*}
\end{exmp}
\noindent where $F$ is the cumulative distribution function of the standard Gaussian distribution. Lemma~\ref{lemma1} implies that every $r_t$ lies below the line $P_M=1-P_F$. Hence, all simple hypothesis testing problems with the nominal distributions $P_0$ and $P_1$ can be identified with the lower triangle of $[0,1]^2$ for performance evaluation. Let $P_0^\theta$ and $P_1^\theta$ denote two distinct probability distributions on $(\Omega,\mathscr{A})$ which satisfies Assumption~\ref{assumption3} and for which a likelihood ratio test $\phi$ yields an error probability of $\theta$. Then, the set of all such pairs of distributions is
\begin{equation*}
({\mathcal P}_0\times{\mathcal P}_1)(\theta)=\{(P_0^\theta,P_1^\theta)|\exists \phi: \theta= P_F(\phi,P_0^\theta)=P_M(\phi,P_1^\theta)\}.
\end{equation*}
Hence, varying $\theta\in[0,1/2)$, the set $({\mathcal P}_0\times{\mathcal P}_1)(\theta)$ covers all distinct pairs of distributions on $(\Omega,\mathscr{A})$ satisfying Assumption~\ref{assumption3}. Due to the convexity of $r_t$, see Lemma~\ref{lemma1}, if any pair of distributions $(P_0^\theta,P_1^\theta)$ belong to $({\mathcal P}_0\times{\mathcal P}_1)(\theta)$, then the function $r_t$ lies in the butterfly shaped area
\begin{align*}
\mathcal{B}_\theta=\{\{(P_F,P_M):P_M\geq l_1(P_F)\}\cap \{(P_F,P_M):P_M\leq l_2(P_F)\}\}\nonumber\\
\cup \{\{(P_F,P_M):P_M\leq l_1(P_F)\}\cap \{(P_F,P_M):P_M\geq l_2(P_F)\}\}
\end{align*}
defined by the intersection of two lines,
\begin{align}\label{eq17}
l_1= \{(P_F,P_M):P_M=&\hat\theta(1-P_F)\},\nonumber\\
l_2= \{(P_F,P_M):P_M=&(1-P_F/\hat\theta)\}
\end{align}
on $[0,1]^2$, where $\hat\theta=\theta/(1-\theta)$. Notice that $l_1$ is the inverse function of $l_2$. An example of $\mathcal{B}_\theta$, for $\theta\approx0.309$, together with the lines $l_1$ and $l_2$, and $P_M=r_t(P_F)$ given by Example~\ref{example1} are illustrated in Figure~\ref{fig3}. The function $h_{k}^K$ will be defined later.\\
\begin{figure}[ttt]
  \centering
  \centerline{\includegraphics[width=9.0cm]{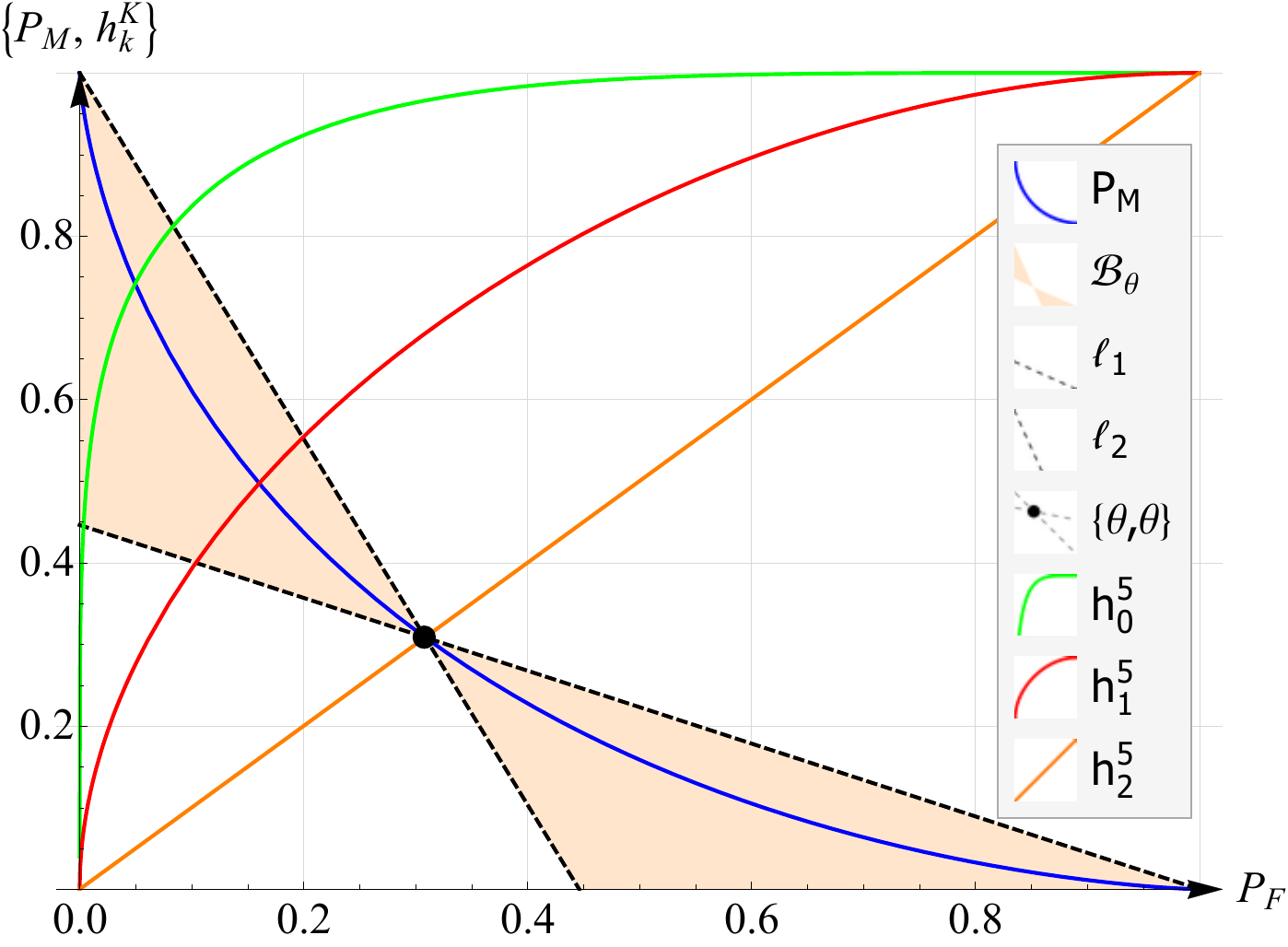}}
\caption{An example of an ROC curve together with the fusion function $h_{k}^K$ for $K=5$ and $k\in\{0,1,2\}$.\label{fig3}}
\end{figure}Next, the aim is to find the maximum loss of detection performance between minimax and minimum error probability decision rules, i.e. between $\boldsymbol{\phi}_r$ and $\boldsymbol{\phi}_0$. The fusion rule $\gamma^K$ is fixed as the majority voting rule for both decision strategies.

\subsection{Single Sensor Case}\label{singlesensorcase00731}
Assume that DDN-WoF has a single decision maker, cf. Section~\ref{singlesensordefinition}, which uses the minimax decision rule $\phi_r$, in comparison to the minimum error probability decision rule $\phi_0$. The loss between the error probabilities of $\phi_r$ and $\phi_0$ can then be found from \eqref{eq5} and \eqref{eq7} with \eqref{eq9}:
\begin{align*}
L(\phi_r,\phi_0,P_0^\theta,P_1^\theta)&=R(\pi_0,\phi_r,\cdot)-R(\phi_0,\cdot)\nonumber\\
&=(C_{01}-C_{00})P_{M}(\phi_r,P_1^\theta)\nonumber\\
&-\frac{1}{2}(C_{01}-C_{00})\left(P_{F}(\phi_0,P_0^\theta)+P_{M}(\phi_0,P_1^\theta)\right)\nonumber\\
&=\frac{1}{2}(C_{01}-C_{00})(2P_{M}(\phi_r,P_1^\theta)\nonumber\\
&-(P_{F}(\phi_0,P_0^\theta)+P_{M}(\phi_0,P_1^\theta))).\nonumber
\end{align*}
Now, the aim is to answer the following problem:
\begin{prb}\label{question1}
What is the maximum of $L$ over all $(P_0^\theta,P_1^\theta)\in ({\mathcal P}_0\times{\mathcal P}_1)(\theta)$ in terms of $\theta$, and in general?
\end{prb}
\noindent The error probability resulting from $\phi_r$ is $\theta=P_{F}(\phi_r,P_0^\theta)=P_{M}(\phi_r,P_1^\theta)$. Minimizing the average error probability resulting from $\phi_0$, hence, maximizing the loss function $L$, is equivalent to finding a $(P_F,P_M)$ on $\mathcal{B}_\theta$ such that $P_F+P_M$ is minimum. Notice that the butterfly $\mathcal{B}_\theta$ is symmetric with respect to $P_M=P_F$. Thus, only the upper part $(P_M\geq\theta)$ or the lower part $(P_M\leq \theta)$ of $\mathcal{B}_\theta$ can be considered, though both of them result in the same error probability. Let $$l_1^*=\{l_1:P_M\geq \theta\}\quad\mbox {and}\quad l_2^*=\{l_2:P_M\leq \theta\}$$ be the line segments of the lines $l_1$ and $l_2$, respectively. Then, the minimum error probability is achieved either on $l_1^*$ or on $l_2^*$. This is because, the points on $l_1^*$ have the lowest $P_M$ for every $P_F$ on upper $\mathcal{B}_\theta$, and the points on $l_2^*$ have the lowest $P_F$ for every $P_M$ on the lower $\mathcal{B}_\theta$. The average error probability increases on $l_1^*$ and decreases on $l_2^*$. Hence, either $(P_F,P_M)=(0,\hat\theta)$, cf.~$l_1(P_F:=0)$ or $(P_F,P_M)=(\hat\theta,0)$ cf.~$l_2(P_F:=\hat\theta)$ minimizes the average error probability for $\phi_0$. Since neither $(0,\hat\theta)$ nor $(\hat\theta,0)$ is achievable as $r_t$ must be continuous with $r_t(0)=1$ and $r_t(1)=0$ it follows that
\begin{align*}
\underset {(P_0^\theta,P_1^\theta)\in {(\mathcal P}_0\times{\mathcal P}_1)(\theta)}{\sup}\,\,&{ L}(\phi_r,\phi_0,P_0^\theta,P_1^\theta)=(C_{01}-C_{00})(\theta-{\hat\theta}/{2})\nonumber\\
&=\frac{(C_{01}-C_{00})}{2}\frac{\theta (1-2\theta)}{1-\theta}
\end{align*}
and
\begin{align}\label{eq18}
\underset{\theta\in[0,1/2)}{\max}\underset {(P_0^\theta,P_1^\theta)\in {(\mathcal P}_0\times{\mathcal P}_1)(\theta)}{\sup}&\,\,{L}(\phi_r,\phi_0,P_0^\theta,P_1^\theta)\nonumber\\
&=(C_{01}-C_{00})\frac{3-2\sqrt{2}}{2}.
\end{align}
Equation \eqref{eq18} suggests that if the costs are chosen as $C_{01}-C_{00}=1$, the loss between $\phi_r$ and $\phi_0$ is upper bounded by $\approx 0.086$.

\subsection{Multiple Sensor Case}\label{multiplesensorcase00732}
Consider the distributed detection network without a fusion center as illustrated in Figure~\ref{fig1}. Let the total number of decision makers not be restricted to $1$, i.e. $K\geq 1$. In this case the loss in detection performance between minimax DDN-WoF and minimum error probability DDN-WoF can be found from \eqref{eq5}, and \eqref{eq7} with \eqref{eq9}:
\begin{align*}
L^K(\phi_r,\phi_0,\gamma^K,P_0^\theta,P_1^\theta)=&R(\pi_0,\phi_r,\gamma^K,\cdot)-R(\phi_0,\gamma^K,\cdot)\nonumber\\
=&\frac{1}{2}(C_{01}-C_{00})(2P_{M_0}(\phi_r,\gamma^K,P_1^\theta)\nonumber\\
-&(P_{F_0}(\phi_0,\gamma^K,P_0^\theta)+P_{M_0}(\phi_0,\gamma^K,P_1^\theta))).\nonumber
\end{align*}
The aim is to address the following problem:
\begin{prb}\label{question2}
What is the maximum of $L$ over all $(P_0^\theta,P_1^\theta)\in ({\mathcal P}_0\times{\mathcal P}_1)(\theta)$ in terms of $\theta$, and in general?
\end{prb}
\noindent Problem~\ref{question2} is a redefinition of Problem~\ref{question1} for multiple sensors. Proposition~\ref{prop1} and Equation~\eqref{eq12} indicate that minimax solution of DDN-WoF leads to $$P_{M_0}(\phi_r,\gamma^K,P_1^\theta)=P_{F_0}(\phi_r,\gamma^K,P_0^\theta)=B(K/2;K,1-\theta).$$ For the minimum error probability decision rule $\boldsymbol{\phi}_0$, it is assumed that all individual decision makers $\phi_{0,i}$ are identical, cf.~Section\ref{assumptions}. Hence, $P_F=P_{F_i}$ and $P_M=P_{F_i}$ for each decision maker, $i$, and due to the analyzes in the previous section, error minimizing $(P_F,P_M)$ lies either on the line segment $l_1^*$ or on $l_2^*$. Without loss of generality either of them can be considered, because $\mathcal{B}_\theta$ is symmetric with respect to $P_M=P_F$. Considering the former choice, let $x\in[1,\infty]$ be a free parameter. Then, $(P_F,P_M)=(\theta/x,(\theta(x-\theta))/(x(1-\theta)))$ characterizes $l_1^*$ completely. This indicates that some choice of $x:=x_{\min}$ minimizes the error probability resulting from $(\boldsymbol{\phi}_0,\gamma^K)$ and maximizes the loss function
\begin{align*}
&L^K_x (\theta)=(C_{01}-C_{00})\Bigg(B(K/2;K,1-\theta)\nonumber\\
&-\frac{1}{2}\Big(B(K/2;K,1-{\theta}/{x})+B\Big(K/2;K,1-\frac{\theta(x-\theta)}{x(1-\theta)}\Big)\Big)\Bigg).
\end{align*}
Figure~\ref{fig4} and \ref{fig5} illustrate $x_{\min}$ and the maximum of ${L}^K_x$, i.e. ${L}^K_{x_{\min}}$, respectively, over $\theta\in(0,1/2)$ and for every $K\in\{3,5,7,9\}$, where $(C_{01}-C_{00})=1$.
\begin{figure}[ttt]
  \centering
  \centerline{\includegraphics[width=9.2cm]{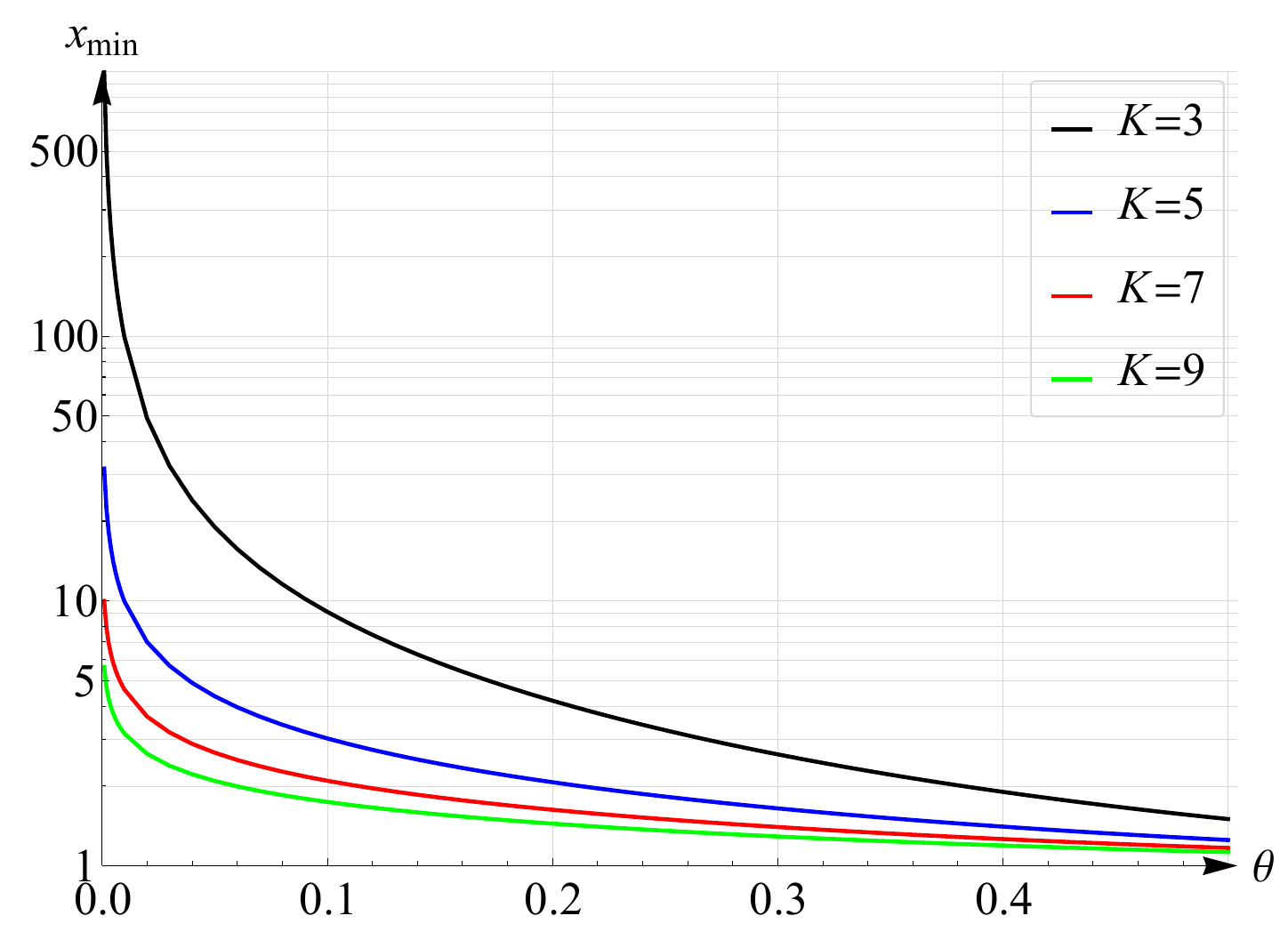}}
\caption{The parameter $x_\mathrm{min}$, which results in the maximum performance loss due to minimax decision making in DDN-WoF for all performance measures $\theta\in(0,1/2)$ and various number of sensors $K$.\label{fig4}}
\end{figure}
Notice that ${L}^1_{x_{\min}}$ is the same with the maximum of the loss $L$ found in the previous section.
\begin{figure}[ttt]
  \centering
  \centerline{\includegraphics[width=9.0cm]{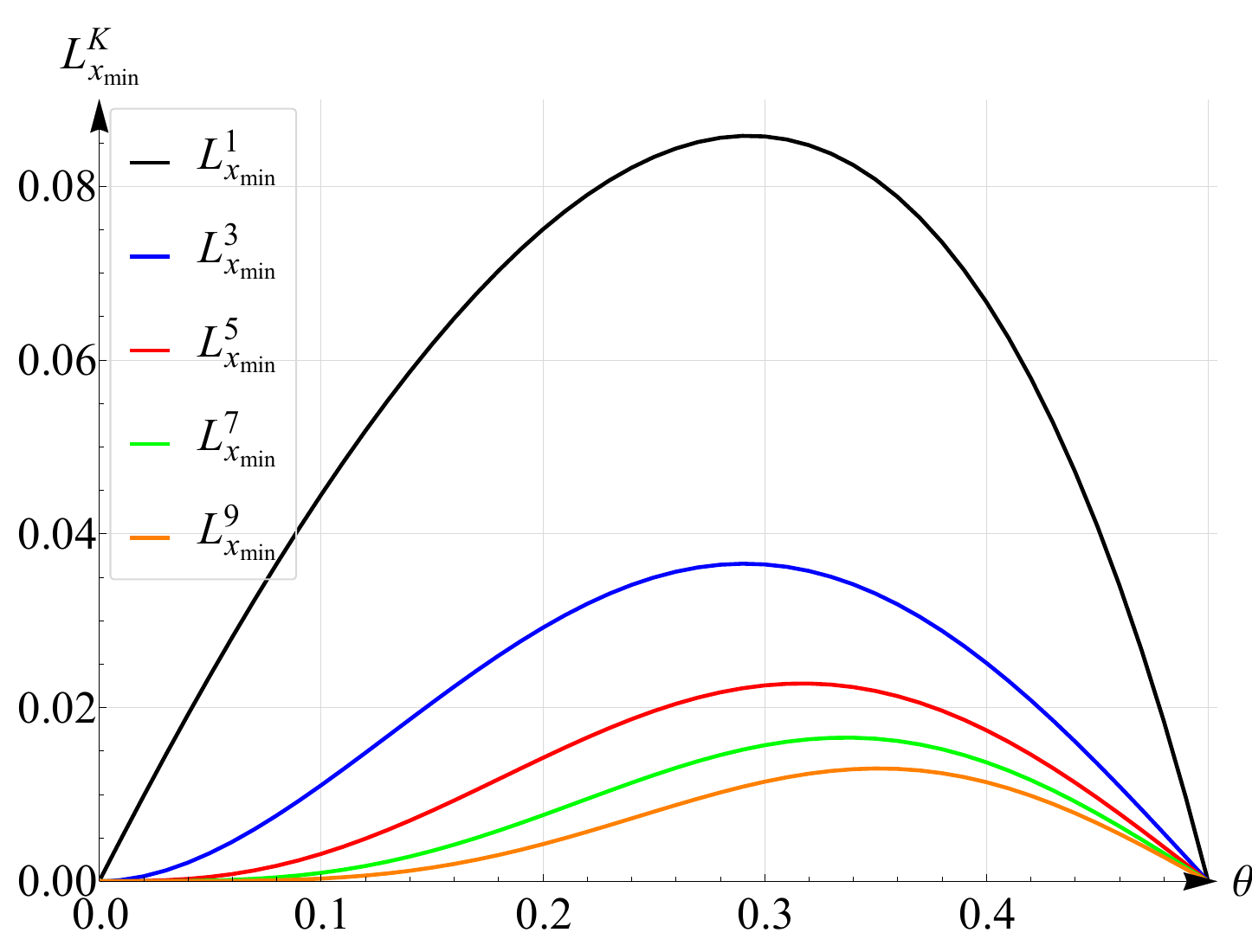}}
\caption{The maximum performance loss due to minimax decision making in DDN-WoF for all performance measures $\theta\in(0,1/2)$ and various number of sensors $K$.\label{fig5}}
\end{figure}
As $x\rightarrow \infty$, which corresponds to the single sensor case, the loss function becomes
\begin{equation*}
L^K_\infty (\theta)=B(K/2;K,1-\theta)-\frac{1}{2}B\left(K/2;K,\frac{1-2\theta}{1-\theta}\right).
\end{equation*}
\begin{prop}\label{prop3}
$L^K_\infty$ is almost everywhere negative on $\theta\in(0,1/2)$ for sufficiently large $K$.
\end{prop}
\noindent Proposition~\ref{prop3} is proven in Appendix~\ref{appendix4}. Negativity of $L^K_\infty$ indicates that single sensor optimum decision rules, which maximize the loss function are no more optimum for multiple sensor case for every $\theta\in(0,1/2)$. Because ${L}^K_{x_{\min}}$ is lower bounded by $0$.
\begin{rem}\label{remark1}
The choice of $\boldsymbol{\phi}_r$ at local decision makers do not only lead to a minimax test but also to an asymptotically optimal test, i.e. to the fastest decay rate of $P_{F_0}$ and $P_{M_0}$ as $K \rightarrow\infty$. This is due to the fact that the fastest exponential decay rate of the error probability amounts to equal rates of decreases for the false alarm and miss detection probabilities \cite[pp.74-82]{Levy}. $P_{F_0}$ and $P_{M_0}$ share the same polynomial function $f$ for the majority voting rule, cf. Appendix~\ref{appendix1}. Hence, they also share the same exponential decay rate whenever $\theta=P_M=P_F$. This is equivalent to $x_{\min}\rightarrow 1$ as $K\rightarrow \infty$, see Figure~\ref{fig4}.
\end{rem}
\section{The Maximum Loss Between Minimax DDN-WoF and DDN-WF}\label{themaximumlossbetweenminimaxddnwofandddnwf0074}
Consider the distributed detection networks with and without a fusion center illustrated in Figures~\ref{fig1} and~\ref{fig2}. For both DDN-WoF and DDN-WF, each local decision maker observes a phenomenon which is characterized by the distribution functions $(P_0^\theta,P_1^\theta)$ that belong to ${(\mathcal{P}_0\times \mathcal{P}_1)(\theta)}$. In either case the sensors are known to employ likelihood ratio tests and $\theta<1/2$. Both sensor networks have $K$ decision makers, where $K$ is assumed to be odd, cf. Proposition~\ref{prop2}, Section~\ref{sec4}, and the fusion center is assumed not to collect any observation.\\
As shown in the previous sections, the DDN-WoF solution to minimax hypothesis testing problem suggests that the loss of performance due to minimax decision making is small, see Figure~\ref{fig5}, and is asymptotically optimum cf. Remark~\ref{remark1}. In addition to the statistically satisfactory properties, DDN-WoF is also physically robust, making it appealing for real time applications. A major drawback of DDN-WoF compared to DDN-WF is a possible loss of performance, as DDN-WF allows joint optimization of decision and fusion rules, in contrary to DDN-WoF. It can further be assumed that both DDN-WoF and DDN-WF are minimax robust, since this is a desired property, which leads to a little or no loss of performance. Hence, the loss function between DDN-WoF and DDN-WF can be defined as
\begin{align}\label{eq_wf1}
L^K(\cdot)=&R(\pi_0,\boldsymbol{\phi}_r,\gamma^K,\cdot)-R(\pi_0,\boldsymbol{\phi}_r^{*},\gamma_k^K,\cdot)\nonumber\\
=&(C_{01}-C_{00})\left(P_{M_0}(\boldsymbol{\phi}_r,\gamma^K,P_1^\theta)-P_{M_0}(\boldsymbol{\phi}_r^{*},\gamma^K_k,P_1^\theta)\right)\nonumber\\
=&(C_{01}-C_{00})\left(B(K/2;K,1-\theta)-P_{M_0}(\boldsymbol{\phi}_r^{*},\gamma^K_k,P_1^\theta)\right),
\end{align}
where $\gamma^K$ is the majority voting rule and $\gamma_k^K$ is the K-out-of-N fusion rule with overall error probabilities as given in Section~\ref{multisensordefinition2}. Notice that $\boldsymbol{\phi}_r$ and $\boldsymbol{\phi}_r^{*}$ are not the same decision rules, i.e. each component of $\boldsymbol{\phi}_r$ leads to $\theta=P_M=P_F$ but $\boldsymbol{\phi}_r^{*}$ results in $P_F=P_{F_i}$ and $P_M=P_{M_i}$ (not necessarily $P_M=P_F$) for every decision maker $i$. Hence, the optimization for DDN-WF runs over all possible fusion thresholds $k$ and pairs $(P_F,P_M)$ jointly. Now, the aim is to provide a solution to the following problem:
\begin{prb}\label{eq_wf2}
What is the maximum loss of performance between minimax DDN-WoF and minimax DDN-WF, i.e. the maximum of $L(\cdot)$ over all $(P_0^\theta,P_1^\theta)\in ({\mathcal P}_0\times{\mathcal P}_1)(\theta)$, $\boldsymbol{\phi}_r^{*}$ and $\gamma^K_k$, in terms of $\theta$ and in general?
\end{prb}
Determining a solution to Problem~\ref{eq_wf2} is equivalent to finding the maximum gain achievable if DDN-WoF is re-designed to be in the form of a DDN-WF having no information about the observation statistics.

%%%%%%%First let us assume that $\theta$

\subsection{Derivation of the Maximum Performance Loss}\label{derivationofthemaximumperformanceloss00741}
It can be seen that the maximization of $L$ in \eqref{eq_wf1} is equivalent to the minimization of $P_{M_0}(\boldsymbol{\phi}_r^{*},\gamma^K_k,P_1^\theta)$ over all $\boldsymbol{\phi}_r^{*}$, $\gamma^K_k$ and $(P_0^\theta,P_1^\theta)\in({\mathcal P}_0\times{\mathcal P}_1)(\theta)$. Since the decision makers are identical, for every $(P_0^\theta,P_1^\theta)\in({\mathcal P}_0\times{\mathcal P}_1)(\theta)$, the decision rule $\boldsymbol{\phi}_r^{*}$ results in $(P_F,P_M)\in B_\theta$, hence the minimization over the distributions $(P_0^\theta,P_1^\theta)\in({\mathcal P}_0\times{\mathcal P}_1)(\theta)$ and the decision rule $\boldsymbol{\phi}_r^{*}$ can be performed over $(P_F,P_M)\in B_\theta$. It is important to note that the fusion rule $\gamma_k^K$ must also guarantee $P_{M_0}=P_{F_0}$ in addition to the minimization of $P_{M_0}$. From Equations~\eqref{eq11} and \eqref{eq12}, the restriction of $P_{M_0}=P_{F_0}$ yields the set of all possible valid pairs $(P_F,P_M)$ to belong to
\begin{equation}\label{equation00727}
(\mathcal{F}\times \mathcal{M})^K_{k}=\Bigg\{\hspace{-1mm}(P_F,P_M):B(k;K,P_F)+B(k;K,1-P_M)=1\Bigg\}
\end{equation}
where $\mathcal{F}^K_{k}$ is the set of all $P_F$s and $\mathcal{M}^K_{k}$ is the set of all corresponding $P_M$s. Define a function, called minimax fusion function, $h^K_k:[0,1]\rightarrow [0,1]$ with the mapping $\mathcal{F}^K_{k} \overset{h^K_k}{\mapsto} \mathcal{M}^K_{k}$, i.e.
\begin{equation}\label{equation00727a}
P_M=h^K_k(P_F)=1-B^{-1}(1-B(P_F)).
\end{equation}
Now only the pair $(B_\theta,h^K_k)$ can further be considered for minimization, because the minimization over $(P_0^\theta,P_1^\theta)\in({\mathcal P}_0\times{\mathcal P}_1)(\theta)$ and $\boldsymbol{\phi}_r^{*}$ is restricted to $(P_F,P_M)\in B_\theta$ and the minimization over $\gamma_k^K$ is restricted to $h^K_k$.\\
The properties of $B_\theta$ has been introduced in Section~\ref{sec4}. In the following, the aim is to evaluate the properties of the minimax fusion function $h^K_k$ and determine which points in $\mathcal{B}_\theta$ satisfy $h^K_k$ and from those points to find the point that minimizes $P_{M_0}$ (or equivalently $P_{F_0}$). It can be seen that the function $h^K_k$ is continuous, passes through the points $(P_F,P_M)\in\{(0,0),(1,1)\}$ for every $K$ and $k< K$. Therefore, at least one point in $(P_F^*,P_M^*)\in\mathcal B_\theta$ satisfies $P_M^*=h^K_k(P_F^*)$. An example of $h^K_k$ is illustrated in Figure~\ref{fig3} for $K=5$ and $k\in\{0,1,2\}$; the cases of $k\in\{3,4\}$ are omitted for the sake of clarity, since $h^K_k$ is symmetric with respect to $P_M=P_F$ (this will be proven later). The following remark and Proposition~\ref{proposition00741} will be used to prove the monotonicity properties of $h^K_k$ in Proposition~\ref{proposition00742}.
\begin{rem}\label{kor1}
$P_{F_0}(P_F;K,k)$ is a monotonically decreasing and $P_{M_0}(P_M;K,k)$ is a monotonically increasing function of $k$.
\end{rem}
\begin{prop}\label{proposition00741}
$P_{F_0}(P_F;K,k)$ and $P_{M_0}(P_M;K,k)$ are monotonically increasing functions of $P_F$ and $P_M$ respectively.
\end{prop}
A proof of Proposition~\ref{proposition00741} can be found in Appendix~\ref{proofofproposition00741}.
\begin{prop}\label{proposition00742}
For every $K$ and $k$, $h^K_k$ is a monotonically increasing function and in particular if $k\in\{0,\ldots,\lfloor{K}/{2}\rfloor-1\}$, $h^K_k(P_F)>P_F$ and if $k\in\{\lfloor {K}/{2}\rfloor+1,\ldots,K-1\}$, $h^K_k(P_F)<P_F$ for all $P_F\in(0,1)$.
\end{prop}
A proof of Proposition~\ref{proposition00742} can be found in Appendix~\ref{proofofproposition00742}. In addition to the monotonicity properties of $h^K_k$, it is also a symmetric function; see the following remark:
\begin{rem}\label{remark00742}
The function $h^K_k$ is symmetric with respect to $k=\lfloor{K}/{2}\rfloor$, meaning that it accepts an inverse function. That is for every $m\in\{0,\ldots,\lfloor{K}/{2}\rfloor\}$, if $P_M=h^K_{\lfloor{K}/{2}\rfloor-m}(P_F)$ then, ${P_F=h^K_{\lfloor{K}/{2}\rfloor+m}(P_M)}$. The assertion follows from the condition that creates $(\mathcal{F}\times \mathcal{M})^K_{k}$. For $m=0$, we have $P_{M_0}(x;K,k)=P_{F_0}(x;K,k)$. Thus, for every $(P_F,P_M)\in(\mathcal{F}\times \mathcal{M})^K_{k}$, it is true that ${P_F=P_M}$, i.e. $h^K_{\lfloor{K}/{2}\rfloor}$ and its inverse are the same. Similarly, when $m>0$,
\begin{align*}
P_{M_0}\left(x;K,\lfloor{K}/{2}\rfloor+m\right)=P_{F_0}\left(x;K,\lfloor{K}/{2}\rfloor-m\right),\\ P_{M_0}\left(x;K,\lfloor{K}/{2}\rfloor-m\right)=P_{F_0}\left(x;K,\lfloor{K}/{2}\rfloor+m\right)
\end{align*}
prove the symmetry of $h^K_k$ on $[0,1]^2$.
\end{rem}
To date all the necessary properties of the set ${B}_\theta$ and the minimax fusion function $h^K_k$ have been derived. It is now possible to state the main theorem of this section.
\begin{thm}\label{theorem00743}
For a fixed $\theta$, among all $(P_0^\theta,P_1^\theta)\in ({\mathcal P}_0\times{\mathcal P}_1)(\theta)$, and all minimax fusion rules $h^K_k$, minimum error probability, i.e. the maximum of \eqref{eq_wf1}, is achieved by the fusion threshold $k=0$, $($or $k=K-1$$)$ and a point on $l_1^*$ $($or on $l_2^*)$. Hence, the maximum of performance loss due to the absence of a fusion center for $C_{01}-C_{00}=1$ is given by
\begin{equation}\label{equation00729}
L^K(\theta)=\frac{1}{1+\hat{\theta}^K}-B\left(\lfloor{K}/{2}\rfloor;K,\theta\right).
\end{equation}
Moreover,
\begin{equation}\label{equation00730}
\lim_{K\rightarrow \infty}\arg \sup_{\theta\in(0,1/2)} \; L^K(\theta)=\lim_{K\rightarrow \infty}\sup_{\theta\in(0,1/2)}\; L^K(\theta)=\frac{1}{2}.
\end{equation}
\end{thm}

\begin{proof}
As mentioned before, the maximum of \eqref{eq_wf1} requires joint minimization over $\mathcal{B}_\theta$ and $h_k^K$. Due to the properties of $h_k^K$ stated by Propositions~\ref{proposition00741} and \ref{proposition00742}, the minimization over $\mathcal{B}_\theta$ can further be confined to the line segments $l_1^*$ and $l_2^*$. The details are as follows: continuity and monotonicity of $h_k^K$ on $[0,1]^2$ guarantees that for every $K$, $h_k^K$ intersects only a single point $({P_F}_k^K,{P_M}_k^K)$, which is either on $l_1^*$ or on $l_2^*$. Examples of $({P_F}_k^K,{P_M}_k^K)$ can be seen in Figure~\ref{fig3}, e.g. $({P_F}_1^5,{P_M}_1^5)\approx (0.1,0.4)$, which is the intersection point of $h_1^5$ with $l_1$. Again by the monotonicity of $h_k^K$, all other points belonging to $\mathcal{B}_\theta$ and intersected by $h_k^K$ have higher $P_F$ and $P_M$. From Proposition~\ref{proposition00741}, $P_{F_0}$ and $P_{M_0}$ are increasing in $P_F$ and $P_M$, respectively. Therefore, the minimization over $\mathcal{B}_\theta$ reduces to a minimization over $l_1^*$ and $l_2^*$. The minimization can further be reduced to either $l_1^*$ and $h_k^K$ for $k\in \{0,\ldots,\lfloor{K}/{2}\rfloor\}$, or $l_2^*$ and $h_k^K$ for $k\in \{\lfloor{K}/{2}\rfloor,\ldots,K-1\}$. This result follows from the symmetry of $\mathcal{B}_\theta$, and $h_k^K$ with respect to $P_M=P_F$, cf. Remark~\ref{remark00742}, Figure~\ref{fig3}. Since both choices result in the same error probability, let us consider $l_1^*$ and $h_k^K$ for $k\in \{0,\ldots,\lfloor{K}/{2}\rfloor\}$, and generalize the results thereafter. By doing so, the intersection point of $h_k^K$ with the line segment $l_1^*$ is $({P_F}_k^K,\hat{\theta}(1-{P_F}_k^K))$. Since $h^K_k$ passes through this point, this point must satisfy Equation~\eqref{equation00727a}. Let $x_k^K=1-{P_F}_k^K$, then Equation~\eqref{equation00727a} for $(1-x_k^K,\hat{\theta}x_k^K)$ can be written as
\begin{equation}\label{equation00731}
x_k^K=\left(\frac{1}{\hat{\theta}^K f_1(x_k^K,\cdot)+f_2(x_k^K,\cdot)}\right)^{1/K},
\end{equation}
where
\begin{align*}
f_1(x_k^K,\cdot)&=(\hat{\theta}x_k^K)^{-K}B(k;K,1-\hat{\theta}x_k^K),\\
f_2(x_k^K,\cdot)&=(x_k^K)^{-K}B(k;K,1-x_k^K).
\end{align*}
The difference in error probability for the cases $k=0$ and $k>0$ can then be determined by
\begin{align}\label{equation00734}
L^K(\cdot)=&P_{M_0}(\hat{\theta}x_k^K;K,k)-P_{M_0}(\hat{\theta}x_0^K;K,0)\nonumber\\
=&(\hat{\theta}x_k^K)^{K} f_1(x_k^K,\cdot)-\hat{\theta}^K/(1+\hat{\theta}^K)\nonumber\\
=&\frac{\hat{\theta}^K}{1+\hat{\theta}^K}\left(\frac{\hat{\theta}^K f_1(x_k^K,\cdot)+f_1(x_k^K,\cdot)}{\hat{\theta}^K f_1(x_k^K,\cdot)+f_2(x_k^K,\cdot)}-1\right)>0,
\end{align}
where the second equality follows from \eqref{eq12} and \eqref{equation00731}, the third equality follows from \eqref{equation00731} and the last inequality follows from $f_1(x_k^K,\cdot)-f_2(x_k^K,\cdot)>0$, since $0<\hat{\theta}<1$. This proves that $h_0^K$ and the point $(1-1/(1+\hat\theta^K)^{1/K},\hat\theta/(1+\hat\theta^K)^{1/K})$ on $l_1^*$ minimizes the error probability. Due to the symmetry of the problem, this result is equivalent to $h_{K-1}^K$ and $(\hat\theta/(1+\hat\theta^K)^{1/K},1-1/(1+\hat\theta^K)^{1/K})$ on $l_2^*$. Equation~\ref{equation00729} is then immediate (and is a very special case) from \eqref{equation00734} with $k=\lfloor{K}/{2}\rfloor$ and noting that for this choice, $\theta={P_M}_{\lfloor{K}/{2}\rfloor}^K={P_F}_{\lfloor{K}/{2}\rfloor}^K$. A proof of \eqref{equation00730} is given in Appendix~\ref{proofofequation00730}.
\end{proof}
The derivations indicate that given $\theta$, the lack of fusion center amounts to a maximum loss of performance given by \eqref{equation00729}, without having any knowledge about the observation statistics. In case $\theta$ is also unknown, the maximum loss $L^K$ tends to $1/2$, when $K$ and $\theta$ are large enough. This can also be seen in Figure~\ref{figure0076}, where $L^K$ is illustrated for various $K$. Therefore, it is theoretically possible that while DDN-WoF gives decisions that are equivalent to tossing a coin, DDN-WF can give decisions that are free of errors. The fusion rules which provide this property are AND ($k=0$) and OR ($k=\lfloor{K}/{2}\rfloor-1$) fusion rules, which are widely used in many practical applications due to their simplicity.
\begin{figure}[ttt]
  \centering
  \centerline{\includegraphics[width=8.8cm]{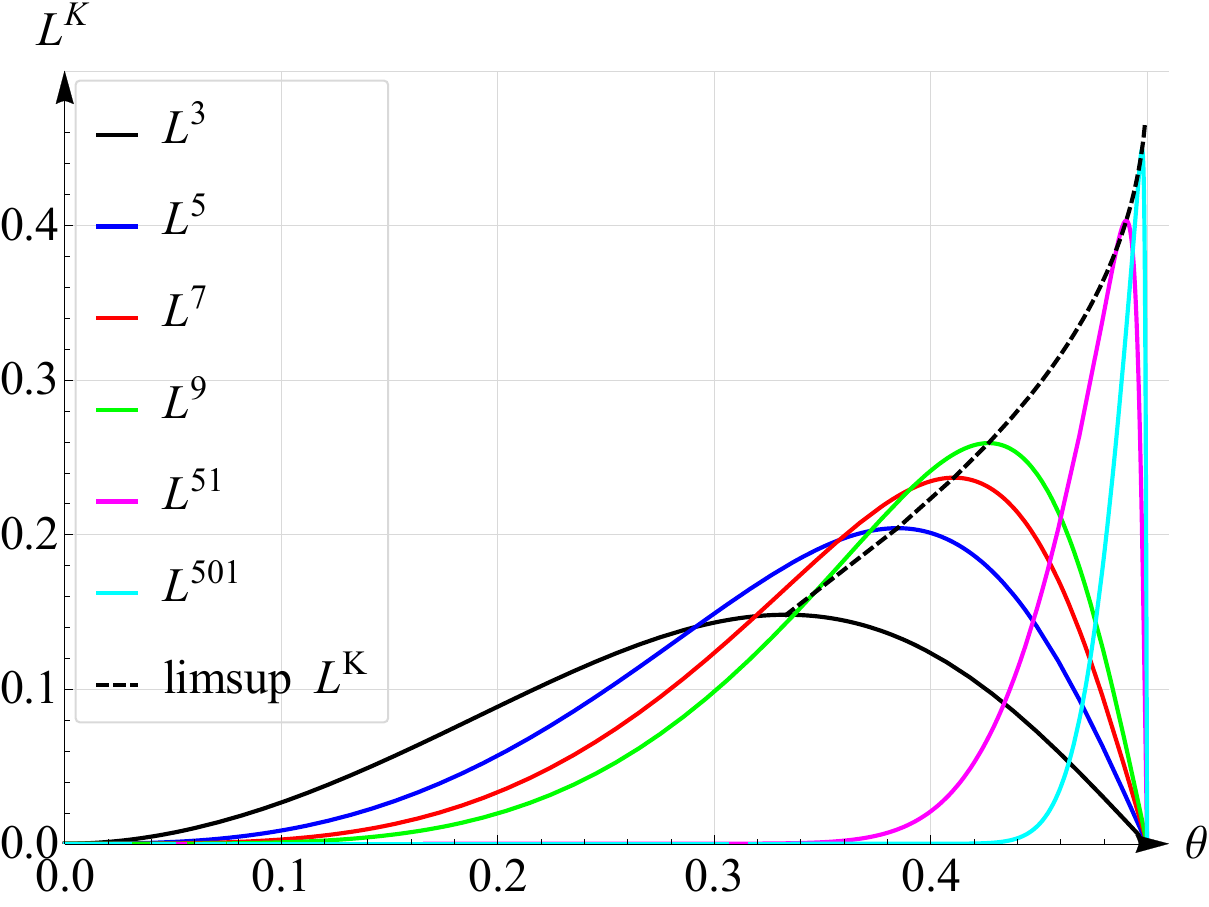}}
\caption{Maximum performance loss between minimax DDN-WoF and minimax DDN-WF for various $K$.\label{figure0076}}
\end{figure}
\subsection{Generalizations}\label{generalizations00742}
There are two possible generalizations, which are detailed as follows:
\subsubsection{Non-identical Decisions Scenario}\label{nonidenticaldecisionsscenario007421}
The results obtained above can be generalized to non-identical decisions scenario. Consider $K$ to be sufficiently large. It was shown that $k=0$ or $k=\lfloor{K}/{2}\rfloor-1$ maximizes $L^K$. Fixing $k=0$ and allowing non-identical decisions, the condition $P_{F_0}=P_{M_0}$ reduces to
\begin{equation}\label{equation00735}
\prod_{i=1}^K P_{M_i}+\prod_{i=1}^K (1-P_{F_i})=1.
\end{equation}
When $K$ is large, either $P_{F_i}$ must be small or $P_{M_i}$ must be large such that \eqref{equation00735} holds. The
$P_{M_i}$ are large only on $l_2$, for which the $P_{F_i}$ are also small. This eventually indicates that either \eqref{equation00735} does not hold or if it holds, the error probability is higher than that of the one obtained by considering the points on $l_1$. Hence, for each sensor, $(P_{F_i},P_{M_i})$ can be sampled from $l_1$,
which is defined to satisfy $1-P_{F_i}=P_{M_i}/\theta$. Inserting this into \eqref{equation00735} gives
\begin{equation*}
\prod_{i=1}^K P_{M_i}=\frac{\hat{\theta}^K}{1+\hat{\theta}^K}=P_{M_0}(\hat{\theta}x_0^K;K,0),
\end{equation*}
which is the same error probability with the one that is obtained by the identical local decision makers constraint.
\subsubsection{Comparison Regarding the Number of Sensors}\label{comparisonregardingthenumberofsensors007422}
There is a possible generalization of the loss of performance in terms of the number of sensors.
If advantageous, an observer may prefer to increase the number of sensors instead of re-designing the network with a fusion center. Let $K_1$ be the number of sensors for DDN-WF and $K_2$ is the number of sensors for DDN-WoF. Then, for the worst case analysis,
\begin{equation}\label{eqgen2}
P_{M_0}(\theta;K_2,K_2/2)={\hat{\theta}^{K_1}}/({1+\hat{\theta}^{K_1}}).
\end{equation}
Solving Equation~\eqref{eqgen2} for $K_1$ gives
\begin{equation*}
K_1=\left\lceil\frac{\log\hat{P}_{M_0}}{\log\hat\theta}\right\rceil,\quad \hat{P}_{M_0}=\frac{P_{M_0}(\theta;K_2,K_2/2)}{1-P_{M_0}(\theta;K_2,K_2/2)}.
\end{equation*}
The relation between $K_1$ and $K_2$ is exponential, e.g. for $\theta\approx0.19$, if $K_1=19$, then $K_2=101$. Therefore, the change of the network is more practical, especially if DDN-WoF possesses a large number of sensors.

\section{Conclusions}\label{conclusions}
Implications of minimax detection has been studied for two cases of decentralized detection networks, with and without a fusion center. Under a Bayesian setting of the hypothesis testing problem, DDN-WoF is composed of a majority fusion rule and identical local decision makers, each resulting in an error probability of $\theta=P_F=P_M$. For this setup, it has been shown that increasing the total number of sensors from an odd number to an even number does not lead to an increase in detection performance. This result is counter-intuitive, because large deviations theory indicates that the error probability decreases exponentially with the increase of the number of observations, which is analogous to the number decision makers in DDN-WoF. Another implication of minimax decision making is on the loss of detection performance in terms of average error probability. The bounds derived for a single decision maker and for multiple decision makers indicate no significant loss of detection performance due to minimax decision making. Another concern is that single sensor optimum decision rules rapidly become non optimal for multi-sensor systems. In many applications DDN-WoF is considered as a default distributed detection scheme due to its physical robustness properties and the exponential decay rate of error probability provided by the large deviations theory. The results indicate that when the number of decision makers $K$, and the average error probability $\theta<1/2$ are large enough, it is possible that DDN-WoF outputs random decisions while DDN-WF is error free. This result has been generalized to the case, where the decision makers are not necessarily identical. Another generalization suggests that for the worst case analysis, the same detection performance between DDN-WoF and DDN-WF can be obtained if the number of sensors for DDN-WoF is exponentially larger than that of DDN-WF. It is then more appealing to re-design DDN-WoF in the form of DDN-WF, instead of increasing the number of sensors. Our findings are theoretical and we believe that they are crucial before, during, and after designing minimax robust decentralized detection systems.

\appendices\label{appx}
\section{Proof of Proposition\ref{prop1}}\label{appendix1}
\begin{proof}\label{proof1}
Given $P_F=P_{F_i}$ and $P_M=P_{M_i}$ for all $i\in\{1,\ldots,K\}$, false alarm and miss detection probabilities resulting from $\gamma^K$ are $P_{F_0}=1-B(K/2;K,P_F)$ and $P_{M_0}=B(K/2;K,1-P_M)$, respectively, where $B(k;K,P)$ is a binomial cumulative distribution function with at most $k$ successes out of $K$ trials each having a success probability $P$. Let $X\sim B(K,P)$ and ${Y\sim B(K,1-P)}$ be two Binomial r.v.s with $K$ trials each having a success probability $P$ and $1-P$, respectively. Then, for two disjoint events $E_1=X\leq K/2$ and $E_2=(K-X)\leq K/2$,
\begin{equation*}
P(E_1 \cup E_2)=P(E_1)+P(E_2)=1.
\end{equation*}
Now, by noting that $Y=K-X$ in distribution, we have
\begin{align*}
P(E_1)+P(E_2)&=P(X\leq K/2)+P(Y\leq K/2)\nonumber\\
&=B(K/2;K,P)+B(K/2;K,1-P)=1
\end{align*}
which implies that $P_{F_0}$ and $P_{M_0}$ own the same polynomial function $f$ s.t. $P_{F_0}=f(P_F)$ and $P_{M_0}=f(P_M)$. From Proposition\ref{prop1}, $f$ is monotonically increasing, hence $P_{F_0}=P_{M_0}$ iff $P_F=P_M$.
\end{proof}

\section{Proof of Proposition\ref{prop2}}\label{appendix2}
\begin{proof}\label{proof2}
Using the substitution $j=i+1$, we have
\begin{align}\label{equationa3}
P_F P_{F_0}^{2K-1} &= \sum_{i=K}^{2K-1} \binom{2K-1}{i} {P_F}^{i+1}(1-P_F)^{2K-1-i}\nonumber\\
&= \sum_{j=K+1}^{2K} \binom{2K-1}{j-1} {P_F}^{j}(1-P_F)^{2K-j}\nonumber\\
&= \left(\sum_{j=K+1}^{2K-1}\binom{2K-1}{j-1} {P_F}^{j}(1-P_F)^{2K-j} \right)+ {P_F}^{2K}
\end{align}
and
\begin{align}\label{equationa4}
(1-P_F)P_{F_0}^{2K-1}&= \sum_{i=K}^{2K-1}\binom{2K-1}{i}{P_F}^i(1-P_F)^{2K-i}\nonumber\\
&=\binom{2K-1}{K}{P_F}^K(1-P_F)^K \nonumber\\
&+\sum_{i=K+1}^{2K-1} \binom {2K-1}{i}{P_F}^i(1-P_F)^{2K-i}.
\end{align}
Adding up \eqref{equationa3} and \eqref{equationa4}, we get
\begin{align*}
&(1-P_F)P_{F_0}^{2K-1} + P_FP_{F_0}^{2K-1}\nonumber\\
&= \binom{2K-1}{K}P_F^K(1-P_F)^K\nonumber \\
&+ \sum_{i=K+1}^{2K-1}\left(\binom{2K-1}{i-1}+\binom{2K-1}{i}\right){P_F}^{i}(1-P_F)^{2K-i} +{P_F}^{2K}\nonumber\\
&=\frac{1}{2}\binom {2K} K {P_F}^K(1-P_F)^K+\sum_{i=K+1}^{2K}\binom {2K} i {P_F}^i(1-P_F)^{2K-i}\nonumber\\
&=P_{F_0}^{2K}
\end{align*}
using the identities
\begin{equation*}
\binom{2K-1}{K}=\frac{1}{2}\binom{2K}{K},\quad\binom{2K-1}{i}+\binom{2K-1}{i-1}=\binom{2K}{i}
\end{equation*}
and
\begin{equation*}
{P_F}^{2K}=\binom{2K}{2K} {P_F}^{2K}(1-P_F)^{2K-2K}.
\end{equation*}
\end{proof}

\section{Proof of Lemma~\ref{lemma1}}\label{appendix3}
\begin{proof}\label{proofoflemma00732}
By definition, $P_F$ and $P_M$ are probabilities, hence $(P_F,P_M)\in[0,1]^2$. Evaluating $P_F=1-P_0[l(Y)\leq t]$ and $P_M=P_1[l(Y)\leq t]$ for $\lim_{t\to\ 0}$ and $\lim_{t\to\ \infty}$ shows that $r_t$ passes through the points $(1,0)$ and $(0,1)$. Let $p_{0,l}$ and $p_{1,l}$ be the density functions of $l(Y)$ for $Y\sim P_0$ and $Y\sim P_1$, respectively. Since $r_t$ is differentiable for every $t$, i.e.
\begin{equation}\label{equationa7}
\frac{d P_M}{d P_F}=\frac{d P_M}{d t}\frac{d t}{d P_F}=-\frac{p_{1,l}(t)}{p_{0,l}(t)}
\end{equation}
exists, $r_t$ is continuous. The miss detection probability can also be written as
\begin{align*}
P_M&=\int_{\{y:l(y)\leq t\}}p_1(y)\mathrm{d}y=\int_{\{y:l(y)\leq t\}}l(y)p_0(y)\mathrm{d}y\nonumber\\
&=\int_0^t x p_{0,l}(x)\mathrm{d}x,
\end{align*}
where the last equality follows from
\begin{equation*}
p_{0,l}(x)=\left|\frac{\mathrm{d}l^{-1}(x)}{\mathrm{d} x} \right|p_0(l^{-1}(x))
\end{equation*}
with the change of variable $x=l(y)$. Hence,
\begin{equation*}
\frac{d P_M}{d t}=t p_{0,l}(t)\overset{\mbox{\eqref{equationa7}}}{\Longrightarrow} \frac{d P_M}{d P_F}=-t.
\end{equation*}
As a result,
\begin{equation*}
\frac{d^2 P_M}{d {P_F}^2}=\frac{d}{d {P_F}}\left(\frac{d P_M}{d {P_F}}\right)=-\frac{d t}{d {P_F}}=\frac{1}{p_{0,l}(t)}\geq 0
\end{equation*}
proves that $r_t$ is convex.
\end{proof}

\section{Proof of Proposition\ref{prop3}}\label{appendix4}
Letting $p=p(\theta)=1-\theta$ and $q=q(\theta)={1-2\theta\over 1-\theta}$, showing that $L^K_\infty$ is negative for sufficiently large $K$ is equivalent to showing that
\begin{equation}\label{eq_app4_1}
\sum_{i=0}^{K/2} {K\choose i} p^i (1-p)^{K-i}<{1\over 2}\sum_{i=0}^{K/2} {K\choose i} q^i (1-q)^{K-i}
\end{equation}
for sufficiently large $K$.
There are two possible cases:
\begin{itemize}
\item Trivial case: For $1/3\leq \theta <1/2$, the sum on the left converges to zero and the sum on the right converges
to a positive number, so the inequality \eqref{eq_app4_1} is true for large $K$.
\item Remaining case: Suppose $0< \theta <1/3$. The inequality
of the sums can be proven working term by term. It suffices to show that
\begin{equation}\label{eq_app4_2}
p^i(1-p)^{K-i}<{1\over 2}q^i(1-q)^{K-i}
\end{equation}
for all $0\leq i\leq K/2$, when $K$ is large enough. Note that ${p(1-q)\over q(1-p)}={1-\theta\over 1-2\theta}> 1$ and
${p(1-p)\over q(1-q)}={(1-\theta)^3\over 1-2\theta}< 1$.
Therefore,
\begin{align}\label{eq_app4_3}
\left({1-p\over 1-q}\right)^K\left(p(1-q)\over q(1-p)\right)^i
&\leq \left({1-p\over 1-q}\right)^K\left(p(1-q)\over q(1-p)\right)^{K/2}\nonumber\\
&= \left({p(1-p)\over q(1-q)}\right)^{K/2}.
\end{align}
\end{itemize}
The right hand side of \eqref{eq_app4_3} can be made less than $1/2$ by taking $K$ sufficiently large, giving the inequality \eqref{eq_app4_2} and hence the inequality \eqref{eq_app4_1}.

\section{Proof of Proposition~\ref{proposition00741}}
\begin{proof}\label{proofofproposition00741}
To prove that $P_{F_0}(P_F;K,k)$ and $P_{M_0}(P_M;K,k)$ are increasing functions of $P_F$ and $P_M$, respectively, it is sufficient to prove it only for $P_{F_0}(P_F;K,k)$. Because
\begin{equation}\label{equationa11}
\frac{\partial{P_{M_0}\left(P_M;K,k\right)}}{\partial{P_{M}}}=\frac{\partial{P_{F_0}\left(P_F;K,k\right)}}{\partial{P_{F}}}|_{P_F:=1-P_M}.
\end{equation}
Noting that \eqref{equationa11} is zero for $P_F=0$, we have
\begin{align}\label{equationa12}
P_{F_0}&=\sum^K_{i=k} \binom{K}{i}P_F^i (1-P_F)^{K-i}\nonumber\\
&=\sum^K_{i=k} \binom{K-1}{i-1}P_F^i (1-P_F)^{K-i}\nonumber\\&+\sum^K_{i=k} \binom{K-1}{i}P_F^i (1-P_F)^{K-i}.
\end{align}
Since in the second sum, the term is zero when $i=K$, we get
\begin{align}\label{equationa13}
&\sum^K_{i=k} \binom{K-1}{i}P_F^i (1-P_F)^{K-i}\nonumber\\
&=\sum^{K-1}_{i=k} \binom{K-1}{i}P_F^i (1-P_F)^{K-i}\nonumber\\&<\sum^{K-1}_{i=k-1} \binom{K-1}{i}P_F^i (1-P_F)^{K-i}.
\end{align}
Changing the variable $j=i+1$,
 \begin{align}\label{equationa14}
&\sum^{K-1}_{i=k-1} \binom{K-1}{i}P_F^i (1-P_F)^{K-i}\nonumber\\
&=\sum^{K}_{j=k} \binom{K-1}{j-1}P_F^{j-1} (1-P_F)^{K-j+1}
\end{align}
and writing \eqref{equationa14} in \eqref{equationa12} with \eqref{equationa13}, it follows that
\begin{align}\label{equationa15}
\sum^K_{i=k} \binom{K}{i}{P_F}^i (1-P_F)^{K-i}&<\sum^{K}_{i=k} \binom{K-1}{i-1}{P_F}^i (1-P_F)^{K-i}\nonumber\\&+\sum^{K}_{j=k} \binom{K-1}{j-1}{P_F}^{j-1} (1-P_F)^{K-j+1}.
\end{align}
Using
\begin{equation*}
{P_F}^i(1-P_F)^{K-i}+{P_F}^{i-1}(1-P_F)^{K-i+1}={P_F}^{i-1}(1-P_F)^{K-i}
\end{equation*}
rewrite \eqref{equationa15},
\begin{equation}\label{equationa17}
\sum^K_{i=k} \binom{K}{i}{P_F}^i (1-P_F)^{K-i}<\sum^{K}_{i=k} \binom{K-1}{i-1}{P_F}^{i-1} (1-P_F)^{K-i}.
\end{equation}
Multiplying \eqref{equationa17} with $K/(1-P_F)$ and noting that
\begin{equation*}
i\binom{K}{i}=K\binom{K-1}{i-1}
\end{equation*}
we finally get
\begin{equation*}
\sum^K_{i=k} \binom{K}{i}{P_F}^{i-1} (1-P_F)^{K-i-1}\left(i-KP_F\right)=\frac{\partial{P_{F_0}\left(P_F;K,k\right)}}{\partial{P_{F}}}>0.
\end{equation*}
\end{proof}

\section{Proof of Proposition~\ref{proposition00742}}
\begin{proof}\label{proofofproposition00742}
The claim will be proven for odd $k$, while its extension to even $k$ can be accomplished following the same line of arguments.
Let the threshold be $k\in\{0,\lfloor {K}/{2}\rfloor -1\}$ for some $K$. If $k=\lfloor{K}/{2}\rfloor$, then clearly
\begin{equation*}
P_{F_0}\left(x;K,\lfloor{K}/{2}\rfloor\right)=P_{M_0}\left(x;K,\lfloor{K}/{2}\rfloor\right),\quad\forall x\in[0,1].
\end{equation*}
One can also see that, cf.~Remark~\ref{kor1},
\begin{equation*}
P_{F_0}\left(x;K,\lfloor{K}/{2}\rfloor-1\right)>P_{F_0}\left(x;K,\lfloor{K}/{2}\rfloor\right),\quad\forall x\in(0,1),
\end{equation*}
and
\begin{equation*}
P_{M_0}\left(x;K,\lfloor{K}/{2}\rfloor-1\right)<P_{M_0}\left(x;K,\lfloor{K}/{2}\rfloor\right),\quad\forall x\in(0,1).
\end{equation*}
Hence,
\begin{equation}\label{equationa23}
P_{F_0}\left(x;K,\lfloor{K}/{2}\rfloor-1\right)>P_{M_0}\left(x;K,\lfloor{K}/{2}\rfloor-1\right),\,\,\forall x\in(0,1).
\end{equation}
For a pair $(P_F,P_M)$ to be valid, it should be in $(\mathcal F\times \mathcal M)^K_{k}$, i.e.
\begin{equation}\label{equationa24}
P_{F_0}\left(P_F;K,\lfloor{K}/{2}\rfloor-1\right)=P_{M_0}\left(P_M;K,\lfloor{K}/{2}\rfloor-1\right).
\end{equation}
Assume that \eqref{equationa24} holds for some $(P_F^*,P_M^*)$ with $P_M^*=P_F^*$ or with $P_M^*<P_F^*$. Then, both
cases are obviously a contradiction with \eqref{equationa23}, since both $P_{F_0}$ and $P_{M_0}$ are monotonically increasing functions of $P_F$ and $P_M$, respectively, cf. Proposition~\ref{proposition00741}. Therefore, $P_M^*>P_F^*$ must be true for all pairs $(P_F^*,P_M^*)\in (\mathcal F\times \mathcal M)^K_{k}$. This proves that $h^K_k(P_F)>P_F$ for all $P_F\in(0,1)$. Clearly, when $k\in\{\lfloor {K}/{2}\rfloor +1,K\}$, due to symmetry, e.g., $P_{M_0}\left(x;K,\lfloor{K}/{2}\rfloor+1\right)=P_{F_0}\left(x;K,\lfloor{K}/{2}\rfloor-1\right)$, the inequalities above change direction and we get $h^K_k(P_F)<P_F$ for all $P_F\in(0,1)$. Next, assume that $(P_F^*,P_M^*)$, $(P_F^*,P_M^*)\neq(1,1)$, is a valid pair that satisfies \eqref{equationa24} and fix a small positive number $\delta$. Since $P_{F_0}$ is increasing,
\begin{align*}
P_{F_0}\left(P_F^*+\delta;K,\lfloor{K}/{2}\rfloor-1\right)>P_{F_0}\left(P_F^*;K,\lfloor{K}/{2}\rfloor-1\right),&\nonumber\\
\forall P_F^*\in[0,1).&
\end{align*}
This suggests that the left hand side of \eqref{equationa24} increases by adding $\delta$ to $P_F^*$. In order \eqref{equationa24} to hold, its right
hand side must also increase, which implies an increase of $P_M^*$ by some positive number
$\epsilon$, since $P_{M_0}$ is also an increasing function. Then, \mbox{$(P_F^*+\delta,P_M^*+\epsilon)\in (\mathcal F\times \mathcal M)^K_k$} for all $(P_F^*,P_M^*)\neq(1,1)$
implies that $h^K_k$ is a monotonically increasing function.\\
\end{proof}

\section{Proof of~\eqref{equation00730}}
\begin{proof}\label{proofofequation00730}
Introducing a random variable $X_K$ with a binomial distribution $B(K,\theta)$, it can be shown that
\begin{equation*}
L^K(\theta)=P[X_K> \lfloor{K}/{2}\rfloor]-\frac{1}{1+{\hat\theta}(\theta)^{-K}}.
\end{equation*}
For every $\theta\leq\frac{1}{2}$, $P[X_K>\lfloor {K}/{2}\rfloor]\leq\frac{1}{2}$ hence $P[X_K> \lfloor{K}/{2}\rfloor]<\frac{1}{2}$.
Assume that $\theta=\theta_K(x)$ where $\theta_	K (x)=\frac{1}{2}\left(1-\frac{x}{\sqrt{K}}\right)$, for some fixed positive $x$.
Then, $\lfloor{K}/{2}\rfloor=E[X_K]+x_K\sigma(X_K)$ with $x_K=x/\sqrt{4\theta_K(x)(1-\theta_K(x))}\sim x$. The central limit theorem implies that
\begin{align*}
 P[X_K> \lfloor {K}/{2}\rfloor]&=P\left[X_K> E[X_K]+x_K\sigma(X_K)\right]\nonumber\\
&=P\left[\frac{X_K-E[X_K]}{\sigma(X_K)}> x_K\right]\nonumber\\
&=P[X^{'}_K> x]=1-F(x)\,\, \mbox{\rm when}\,\,K\rightarrow\infty
\end{align*}
where $X^{'}_K \sim {\cal{N}}(0,\sigma^2)$. Since $\hat\theta(\theta_K(x))^{-K}\rightarrow\infty$ when $K\rightarrow\infty$, we get,
\begin{equation*}
 \lim_{K\rightarrow\infty}\sup_{\theta\leq 1/2}L^K(\theta)\geq\lim_{K\rightarrow\infty}L^K(\theta_K(x))=1-F(x).
\end{equation*}
As $F(x)\rightarrow\frac{1}{2}$ when $x\rightarrow 0^+$, this proves the claim.
\end{proof}

%
% \begin{figure}[thb]
% \centering
% \psfrag{dm1}[t][]{$DM_1$}
% \psfrag{dm2}[t][]{$DM_2$}
% \psfrag{dmk}[t][]{$DM_K$}
% \psfrag{y1}[t]{$y_1$}
% \psfrag{y2}[t]{$y_2$}
% \psfrag{yk}[t]{$y_K$}
% \psfrag{u1}[t][]{$\phi_1$}
% \psfrag{u2}[t][]{$\phi_2$}
% \psfrag{uk}[t][]{$\phi_K$}
% \psfrag{ph}{\hspace{-8mm}Phenomenon}
% \hspace{-3mm}\includegraphics[height=45mm]{figures/withoutfusion.eps}
% \caption{Distributed detection network without a fusion center for $K$ decision makers (DM)s.\label{fig1}}
% \end{figure}

\bibliographystyle{IEEEtran}
\bibliography{strings2new}
\end{document}